\definecolor{mygreen}{RGB}{163,210,163}
\definecolor{myblue}{RGB}{128,128,204}
\definecolor{myred}{RGB}{229,118,106}
\tikzset{%
arrow/.style={very thick},
 stable fill/.style={color=mygreen,fill opacity=1},
 stable draw/.style={thick,color=mygreen!50!black,
 opacity=0
 },
 nash fill/.style={color=myblue,fill opacity=.5},
 nash draw/.style={thick,color=myblue!50!black},
 imaginary fill/.style={color=blue!5,fill opacity=0.5},
 imaginary draw/.style={thick,blue!30},
 real fill/.style={color=myred!20,fill opacity=0.5},
 real draw/.style={myred!50!black,thick}
}
\newif\ifcommentary
\title{\LARGE \bf Stability of Gradient Learning Dynamics in 
Continuous Games:  \\ Scalar Action Spaces}%
\author{%
Benjamin J. Chasnov$^\ast$,  Daniel Calderone$^\ast$, Beh\c cet A\c c\i kme\c se, Samuel A. Burden, Lillian J. Ratliff%
\thanks{B. Chasnov, S. Burden, and L. Ratliff are with the Department of  Electrical and Computer Engineering, 
        University of Washington, Seattle, WA 98115
        {\tt\small $\{$bchasnov,sburden,ratliffl$\}$@uw.edu}}
        \thanks{D. Calderone and B. A\c c\i kme\c se are with the Department of Aeronautics and Astronautics,  University of Washington, Seattle, WA 98115
        {\tt\small $\{$djcal,behcet$\}$@uw.edu}}
\thanks{$^\ast$ Denotes equal contribution.}\thanks{Funding for this work is provided by
NSF Award \#1836819
and NIH 5T90DA032436-09.}
}
\newif\ifsixpages
\pgfplotsset{compat=1.16}
\begin{document}

\maketitle
\thispagestyle{empty}
\pagestyle{empty}
\begin{abstract}
 \noindent
 Learning processes in games
explain how players grapple with one another in seeking an equilibrium.
We study a natural model of 
learning 
based on individual gradients in
two-player continuous games. 
In such games, the arguably natural notion of a local equilibrium is a differential Nash equilibrium. 
However, 
the set of locally exponentially stable equilibria 
of the learning dynamics
do not necessarily coincide with 
the set of differential Nash equilibria of the corresponding game.
To characterize this gap, we provide formal guarantees for the stability or instability of such fixed points by leveraging the spectrum of the linearized game dynamics. 
We provide 
a comprehensive understanding of scalar games and 
find that 
equilibria that are both stable and Nash 
are robust to variations in learning rates. 

\end{abstract}

\section{Introduction}
 The study of learning in games 
 is experiencing a resurgence in
 the control theory \cite{ratliff2016characterization,tang2019distributed,tatarenko2018learning},
 optimization \cite{mazumdar2018fundamental,mertikopoulos2019learning}, and 
 machine learning \cite{bu2019global,chasnov2019convergence,goodfellow2014gans,metz2016unrolled,fiez2019convergence} communities. 
 Partly driving this resurgence 
 is the prospect
 for game-theoretic
 analysis to yield
 machine learning algorithms
 that generalize better or are more robust.
 Towards understanding the optimization landscape in such formulations, dynamical systems theory is emerging as a principal tool for analysis and ultimately synthesis~\cite{mazumdar2018fundamental,boone2019darwin,mertikopoulos2018cycles,berard2019closer,balduzzi2020smooth}.
A predominant learning paradigm used across these different domains is gradient-based learning. Updates in large decision spaces can be performed locally with minimal information, while still guaranteeing local convergence in many problems~\cite{chasnov2019convergence,mertikopoulos2019learning}.

One of the 
primary means to understand the 
optimization landscape 
of games is the eigenstructure and spectrum of the Jacobian of the learning dynamics in a neighborhood of a 
stationary point.
In particular, for a zero-sum continuous game $(\cost,-\cost)$ with some continuously-differentiable 
$\cost$,
the Nash equilibria are saddle points of the function $\cost$. As the example in Fig.~\ref{fig:zssaddle} demonstrates, not all saddle points are relevant. Loosely speaking, the equilibrium conditions for the game correspond to constraints on the curvature directions of the cost function and hence, on the eigenstructure of the Jacobian nearby equilibria.

The local stability of a hyperbolic 
 fixed point in a non-linear system can be assessed by examining the
 eigenstructure of the linearized dynamics~\cite{sastry1999nonlinear,khalil2002nonlinear}. However, in a game context there are extra constraints coming from the underlying game---that is, players are constrained to move only along directions over which they have control. 
 They can only control their individual actions, as opposed to the entire state of the dynamical system corresponding to the learning rules being applied by the agents. 
 It has been observed in earlier work that not all stable attractors of gradient play are local Nash equilibria and not all local Nash equilibria are stable attractors of gradient play~\cite{mazumdar2018fundamental}. Furthermore, changes in players' learning rates---which corresponds to scaling rows of the Jacobian---can change an equilibrium from being stable to unstable and vice versa~\cite{chasnov2019convergence}.

To summarize, there is a subtle but extremely important difference between game dynamics and traditional nonlinear dynamical systems:  alignment conditions are important for distinguishing between equilibria that have game-theoretic meaning versus those which are simply stable attractors of learning rules, and features of learning dynamics such as learning rates can play an important role in shaping not only equilibria  but also alignment properties. Motivated by this observation along with the recent resurgence of applications of learning in games in control, optimization, and machine learning, in this paper we provide an in-depth analysis of the spectral properties of gradient-based learning in two-player continuous games.

 \begin{figure}[t!]
  \input{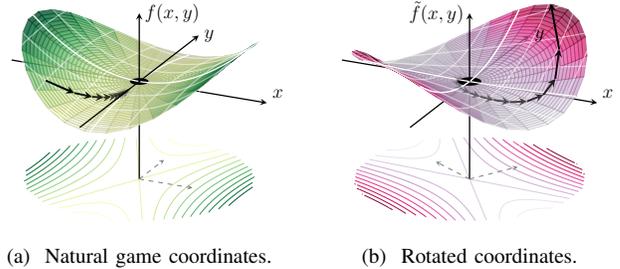}
     \caption{\emph{%
     Cost landscape 
     is crucial to understanding dynamics.} The zero-sum game 
     defined by
     $f(x,y)=\tfrac{1}{2}x^2 -\tfrac{1}{8}y^2$ has a Nash equilibrium at the origin, which is a stable saddle point of gradient play~%
    \eqref{eq:gradientplay}.
     If the cost function is rotated to 
     $\tilde{f}(x,y)=\tfrac{1}{32}x^2 +\tfrac{11}{32}y^2-\tfrac{5\sqrt{3}}{16}xy$---a rotation by $\tfrac{\pi}{3}$---then 
     the origin is no longer a Nash equilibrium, and is \emph{unstable} under gradient play.
     }
      \label{fig:zssaddle}
 \end{figure}
 
\textbf{Contributions.}  
This paper 
characterizes 
the spectral properties of
structured $2\times 2$ matrices
and analyzes the stability of equilibria in continuous games.
Having a complete algebraic understanding of 
the spectrum of the game Jacobian
is fundamental to 
understanding when Nash equilibria coincide with stable equilibria.
Many of our results are geometric in nature and are accompanied by diagrams.

It is known that the quadratic numerical range of a block operator matrix contains the operator's (point) spectrum~\cite{tretter2008spectral}.
Thus, it serves as an important tool for quantifying the spectrum of 
two-player game dynamics.
The method for obtaining the quadratic numerical range
is by reducing a block matrix to $2\times 2$ matrices.

Towards this end, we decompose the $2\times 2$ game Jacobian into coordinates that reflect the interaction between the players. The decomposition provides insights on games and vector fields in general, which permits us to provide a complete characterization of the stability of equilibria in two-player gradient learning dynamics.

\textbf{Organization.}
In Section~\ref{sec:setup},
we describe the gradient-based learning paradigm 
and
analyze the spectral properties of block operator matrices using 
the quadratic numerical range~\cite{tretter2008spectral}.
In Section~\ref{sec:main},
we analyze the spectral properties of two-player continuous games on scalar action spaces.
Our main results are on general-sum games, with insights drawn from specific classes of games.
In Section~\ref{sec:certificates}, we  certify
the stability of Nash and non-Nash equilibria in two-player scalar games. A key finding is that in the scalar case, equilibria that are both stable and Nash are robust to variations in learning rates; in the vector case, they are not.
We provide an example in Section~\ref{sec:examples} and conclude in Section~\ref{sec:conclusion}.

\section{Preliminaries}
\label{sec:setup}

This section contains game-theoretic preliminaries, mathematical formalism, and
a description of the gradient-based learning paradigm studied in this paper.

\subsection{Game-Theoretic Preliminaries}
A $2$-player \emph{continuous game} 
$\mc G = (\cost_1, \cost_2)$ 
is a collection of  costs 
defined on $X=X_1\times X_2$
where player (agent)
$i \in \mc I=\{1,2\}$ 
has cost $f_i:X\to \mb{R}$. In this paper, the results apply to games with sufficiently smooth costs $\cost_i\in C^r(
X,
\mb{R})$ for some $r\geq 0$.
Agent $i$'s set of feasible actions 
is the $\dimm_i$-dimensional precompact set
$X_i \subseteq \mb R^{\dimm_i}$. The notation $x_{-i}$ denotes the action of player $i$'s competitor; 
that is, $x_{-i}=x_j$ where $j\in \mc{I}\backslash\{i\}$.\footnote{For 2-player games, $x_{-1} = x_2$ and $x_{-2} = x_1$.}

The most common and arguably natural notion of an equilibrium in continuous games is due to  Nash~\cite{nash1951non}.
\begin{definition}[Local Nash equilibrium] A joint action profile $x=(x_1,x_2)\in W_1\times W_2\subset X_1\times X_2$ is a local Nash equilibrium on $W_1\times W_2$ if, for each player $i\in\mc{I}$,
$f_i(x_i,x_{-i})\leq f_i(x_i^\prime,x_{-i})$, $\forall x_i^\prime \in W_i$.
\end{definition}
A local Nash equilibrium can equivalently be defined as in terms of best response maps: $x_i\in \arg\min_{y}f_i(y,x_{-i})$.  
From this perspective, local optimality conditions for players' optimization problems give rise to the notion of a differential Nash equilibrium~\cite{ratliff2013characterization,ratliff2016characterization}; non-degenerate differential Nash are known to be generic and structurally stable amongst local Nash equilibria in sufficiently smooth games~\cite{ratliff2014allerton}. Let $D_if_i$ denote the derivative of $f_i$ with respect to $x_i$ and, analogously, let $D_i(D_if_i)\equiv D_i^2f_i$ be player $i$'s individiaul Hessian.
\begin{definition}
\label{def:nash}
For continuous game $\mc{G}=(\cost_1, \cost_2)$ where $\cost_i \in C^2(X_1 \times X_2,\R)$,
 a joint action profile $(x_1,x_2) \in X_1 \times X_2$ is a 
 \emph{differential Nash equilibrium}
 if 
 $D_if_i(x_1, x_2) = 0$ and $D_i^2f_i(x_1,x_2)> 0$ for each $i\in \mc{I}$.
\end{definition}
A differential Nash equilibrium is a strict local Nash equilibrium \cite[Thm.~1]{ratliff2013characterization}. Furthermore, the conditions $D_if_i(x)=0$ and $D_i^2f_i(x)\geq 0$ are necessary for a local Nash equilibrium \cite[Prop.~2]{ratliff2013characterization}.

Learning processes in games, and their study, arose as one of the explanations for how players grapple with one another in seeking an equilibrium~\cite{fudenberg1998theory}. In the case of sufficiently smooth games, gradient-based learning is a natural learning rule for myopic players%
\footnote{A mypoic player effectively believes it cannot influence its opponent's future behavior, and reacts only to local information about its cost.}.

\subsection{Gradient-based Learning as a Dynamical System %
}
At time $\kk$, 
a myopic agent $i$
updates its current action $x_i(\kk)$ by following the gradient of its
individual cost $\obj_i$
given the decisions of its competitors $x_{-i}$.
The synchronous adaptive process that arises is the discrete-time 
dynamical system
\begin{equation}
    \x_i(\kk+1) =x_{i}(\kk)-\gamma_i D_if_i(x_i(\kk),x_{-i}(\kk))
    \label{eq:gradientplay}
\end{equation} for each $i\in \mc{I}$ where $D_if_i$ is the gradient of player $i$'s cost with respect to $x_i$ and $\gamma_i$ is player $i$'s learning rate. 

\paragraph{Stability}
Recall that a matrix $A$ is called Hurwitz if its spectrum lies in the open left-half complex plane $\mb{C}_-^\circ$. Furthermore, we often say such a matrix is \emph{stable} in particular when $A$ corresponds to the dynamics of a linear system $\dot{x}=Ax$ or the linearization of a nonlinear system around a fixed point of the dynamics.\footnote{The Hartman-Grobman theorem~\cite{sastry1999nonlinear} states that around any hyperbolic fixed point of a nonlinear system, there is a neighborhood on which the nonlinear system is stable if the spectrum of Jacobian lies in $\mb{C}_-^\circ$.}  

It is known that \eqref{eq:gradientplay} will converge locally asymptotically to a differential Nash equilibrium if the local linearization is a contraction~\cite{chasnov2019convergence}. 
Let 
\eqnn{g(x)=(D_1f_1(x),D_2f_2(x))
\label{eq:gameform}}
be the vector of individual gradients and let $Dg(x)$ be its Jacobian---i.e., the \emph{game Jacobian}. Further, let $\sigma_p(A)\subset\mb{C}$
denote the \emph{point spectrum} (or \emph{spectrum}) of the matrix $A$, and $\rho(A)$ its \emph{spectral radius}. 
Then, $x$ is \emph{locally exponentially stable} if and only if $\rho(I-\Gamma Dg(x))<1$, where $\Gamma=\mathrm{blockdiag}(\gamma_1 I_{\dimm_1},\gamma_2I_{\dimm_2})$
is a diagonal matrix and $I_{\dimm_i}$ is the identity matrix of dimension $\dimm_i$. 
The map $I-\Gamma Dg(x)$ is the local linearization of \eqref{eq:gradientplay}. 
Hence, to study stability (and, in turn, convergence) properties it is useful to analyze the spectrum of not only the map $I-\Gamma Dg(x)$ but also $Dg(x)$ itself. 

For instance, when $\gamma=\gamma_1=\gamma_2$, the spectral mapping theorem
tells us that $\rho(I-\gamma Dg(x))=\max_{\lambda\in \sigma_p(Dg(x))}|1-\gamma \lambda|$ so that understanding the spectrum of $Dg(x)$ is imperative for understanding convergence of the discrete time update. On the other hand, when $\gamma_1\neq \gamma_2$, we write the local linearization as $I-\gamma_1\Learnrate Dg(x)$ where $\Learnrate=\mathrm{blockdiag}(I_{\dimm_1},\tau I_{\dimm_2})$ and $\tau=\gamma_2/\gamma_1$ is the learning rate ratio. 
Again, via the spectral mapping theorem, when $I-\gamma_1\Learnrate Dg(x)$ is a contraction for different choices of learning rate $\gamma_1$ is determined by the spectrum of $\Learnrate Dg(x)$.
Hence, given a \emph{fixed point}  $\fixedpoint{x}$ (i.e., $g(\fixedpoint{x})=0$), we study the stability properties of the limiting continuous time dynamical system---i.e., $\dot{x}=-g(x) $ when $\gamma_1=\gamma_2$ and $\dot{x}=-\Learnrate g(x)$ otherwise. From here forward, we will simply refer to the system $\dot{x}=-\Learnrate g(x)$ and point out when $\Learnrate=I_{d_1+d_2}$ if not clear from context.

\paragraph{Partitioning the Game Jacobian} Let 
$\fixedpoint{x}=(\fixedpoint{x}_1,\fixedpoint{x}_2)$ 
be a joint action profile such that 
$g(\fixedpoint{x})=0$.
Towards better understanding the spectral properties of $Dg(\fixedpoint{x})$ (respectively, $\Learnrate Dg(\fixedpoint{x})$), we 
partition~$Dg(\fixedpoint{x})$ into blocks:
\eqnn{%
 J(x)=
  \bmat{%
  -D_1^2\cost_1(\fixedpoint{\xx}) & 
  -D_{12}\cost_1(\fixedpoint{\xx}) \\
  -D_{21}\cost_2(\fixedpoint{\xx}) & 
  -D_{2}^2\cost_2(\fixedpoint{\xx})
 }
 =
 \bmat{\Aa & \Bb \\
  \Cc & \Dd}.
 \label{eq:game-jacobian}
}
A differential Nash equilibrium (the second order conditions of which are sufficient for a local Nash equilibrium) is such that 
$\Aa<0$ and
$\Dd<0$. On the other hand, as noted above, $J$ is Hurwitz or stable if its point spectrum $\sigma_p(J)\subset \mb{C}_-^\circ$. 
Moreover,
since the diagonal blocks are symmetric,
$\J$ is 
similar to the matrix 
in Fig~\ref{fig:blockdiag2p}.
For the remainder of the paper, 
we will study the $Dg$ at a given
fixed point $\fixedpoint{x}$ as defined in~\eqref{eq:game-jacobian}.
\begin{figure}[ht]
  \begin{minipage}[c]{0.4\columnwidth}
    	$
 J(\fixedpoint{\x},\fixedpoint{\y})
 \sim \left[\begin{tikzpicture}[scale=0.5,baseline={([yshift=-.5ex]current bounding box.center)}]

		\draw [ultra thick](0,2) -- (.7,1.3);
		\draw [ultra thick] (.7,1.3) -- (2,0);
		\fill [color=gray,color=gray,draw=white,very thick] 
		(0,0) rectangle (.7,1.3);
		\fill [color=gray,draw=white,very thick] 
		(.7,1.3) rectangle (2,2);
	\end{tikzpicture}\right]
 $
  \end{minipage}\hfill
  \begin{minipage}[c]{0.59\columnwidth}
    \caption{\emph{Similarity}: the game Jacobian in~\eqref{eq:game-jacobian} 
    is similar to a matrix with diagonal block-diagonals.
    }
  \label{fig:blockdiag2p}
  \end{minipage}
\end{figure}

\paragraph{Classes of Games} Different classes of games can be characterized via $J$.   For instance,  a 
 \emph{zero-sum game}, where $\cost_1\equiv -\cost_2$, is such that $J_{12}=-J_{21}^\top$. On the other hand, %
 a game
 $\mc G = (\cost_1, \cost_2)$
 is a \emph{potential game} if and only if
 $D_{12}f_1 \equiv D_{21}f_2^\top$~\cite[Thm.~4.5]{monderer1996potential}, 
 which implies that $J_{12}=J_{21}^\top$.

\subsection{Spectrum of Block Matrices}
\label{sec:qnr}
One useful tool for characterizing the spectrum of a block operator matrix is the numerical range and quadratic numerical range, both of which 
contain the operator's spectrum~\cite{tretter2008spectral} and therefore all of its eigenvalues.
The \emph{numerical range} of $\J$ is defined by
\eqn{
   \NR(\J)=\{
\langle \J z,z\rangle:\ 
z\in \mb{C}^{\dimm_1+\dimm_2},\ %
\|z\|_2=1\}, 
}
and is convex. 
Given a block operator $J$,
let
\begin{equation}
\J_{v,w}=\bmat{\langle \J_{11}v,v\rangle & \langle\J_{12}w,v\rangle\\ \langle\J_{21}v,w\rangle & \langle\J_{22}w,w\rangle} 
\label{eq:jvw}
\end{equation}
where $v\in \mb{C}^{\dimm_1}$ and $w\in \mb{C}^{\dimm_2}$.
The \emph{quadratic numerical range} of $\J$,
defined by
\eqnn{
\NR^2(\J)=\bigcup_{v\in \mc{S}_1, w\in \mc{S}_2 }\sigma_p(\J_{v,w}),
\label{eq:qnr}}
is the union of the spectra of~\eqref{eq:jvw}
 where $\sigma_p(\cdot)$ denotes the (point) spectrum 
 of its argument 
 and $\mc{S}_i=\{z\in \mb{C}^{\dimm_i}:\ \|z\|_2=1\}$.
It is, in general, a non-convex subset of $\mb C$.
The quadratic numerical range~\eqref{eq:qnr}
is equivalent to the set of solutions of 
the characteristic polynomial
\eqnn{%
  \eig^2 
  &- \eig(\langle \Aa v,v\rangle  
  + \langle \Dd w,w\rangle ) 
  + \langle \Aa v,v\rangle\langle \Dd w,w\rangle  \\
  &- \langle \Bb v,w\rangle \langle \Cc w,v\rangle  = 0
  \label{eq:characteristic-polynomial}
}
for  $v\in \mc{S}_1$ and $w\in \mc{S}_2$. We use the notation $\langle \J x,y\rangle=\conjugate{x} \J y$ 
to denote the inner product. Note that $\NR^2(\J)$ is a subset of $\NR(\J)$ and, as previously noted, contains $\sigma_p(\J)$.
Albeit non-convex, $\NR^2(\J)$ provides a tighter characterization of the spectrum\footnote{There are numerous computational approaches for estimating the numerical ranges $\NR(\cdot)$ and $\NR^2(\cdot)$ (see, e.g., \cite[Sec. 6]{langer2001new}).}.

\begin{example}
Consider the game Jacobian of the zero-sum game $(f,-f)$ defined by 
cost $f:\R^2\times\R^2 \to \R$,
\[\cost(x,y)=-\tfrac{1}{2}x_1^2+\tfrac{5}{2}x_2^2+7y_1x_1-3y_2x_2-2y_1^2-6y_2^2.\]
The numerical range, quadratic numerical range, spectrum and diagonal entries of $\J$, defined using the origin as the fixed point,
are plotted in Fig.~\ref{fig:qnum-toy}. 
In this example, the origin is not 
a differential Nash equilibrium since $D^2_1f_1(0,0)$ is indefinite, yet it is an exponentially stable equilibrium of $\dot{x}=-g(x)$ since all the eigenvalues of $J$ are all negative.
\label{ex:qnum-toy}
\end{example}

Observing that the quadratic numerical range for a block $2\times 2$ matrix $J$ derived from a game on a finite dimensional Euclidean space reduces to characterizing the spectrum of $2\times 2$ matrices, 
we first characterize stability properties of scalar $2$-player continuous games.

\begin{figure}[t]
\centering
\begin{tikzpicture}
\begin{axis}[
width=.38\textwidth,
height=.34\textwidth,
thick,
axis on top,
axis x line=none,
axis y line = none,
ylabel={\small $\imag(\lambda)$},
xlabel={},
xtick=\empty,ytick=\empty,
xlabel style={below left},
ylabel style={below left},
clip=false]
\addplot graphics
[xmin=-1.5,xmax=1,ymin=-1.5,ymax=1.5]
{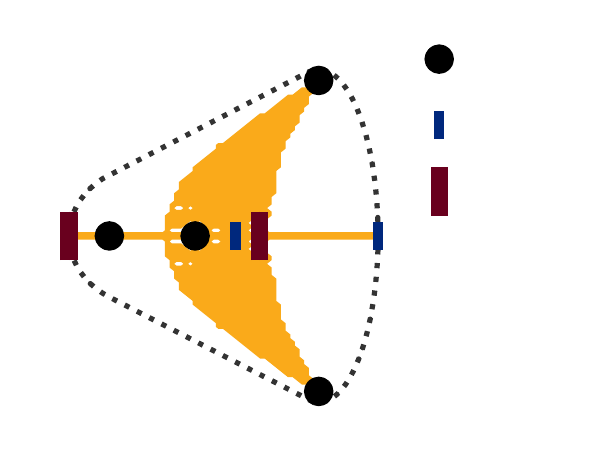};

\draw [<-] (axis cs:-0.5,-0.3) -- (axis cs:-.8,-.8) node [left  ] {$\QNR(\J)$}; 
\draw [-stealth] (axis cs:0,-1.3) -- (axis cs:0,1.3);
\draw (axis cs:.4, 1.1) node [right] {$\spec(J)$};
\draw (axis cs:.4, .68) node [right] {$\spec(J_{11})$};
\draw (axis cs:.4, .27) node [right] {$\spec(J_{22})$};
\draw [-stealth] (axis cs:0,0) -- (axis cs:1.5,0) node [below left] {\small $\real(\lambda)$};
\draw (axis cs:-1.5,0) -- (axis cs:-1.2,0);
\end{axis}
\end{tikzpicture}
\caption{
\emph{
Spectrum of a stable
equilibrium that is not Nash.}
The spectrum of $\J$, $\J_{11}$, and $\J_{22}$ in Example~\ref{ex:qnum-toy} are 
contained
in the numerical range (convex dashed region) and quadratic numerical range (non-convex region) of $J$.
The eigenvalues of $J$ are in the left plane, hence the fixed point is stable under gradient play~\eqref{eq:gradientplay}.
However, the first player's $J_{11}$ is indefinite, hence the fixed point is not a Nash equilibrium.
}
\label{fig:qnum-toy}
\end{figure}

\section{Stability of $2$-player scalar games}
\label{sec:main}

We characterize the stability of differential Nash equilibria in 2-player scalar continuous games. 
Consider a game $(f_1,f_2)$
with action spaces
$X_1,X_2
\subseteq \R$.
Let $x$ be a fixed point of~\eqref{eq:gameform} such that $g(x)=0$.
We decompose its game Jacobian~\eqref{eq:game-jacobian} 
into components that reflect the dynamic interaction between the players.
\subsection{Jacobian Decomposition: $2\times2$ case}

Consider 
the decomposition of a $\R^{2\times 2}$ game Jacobian
\begin{equation}
 \label{eq:2x2:decomp}
J(x)=\bmat{a & b \\ c & d}  
= \bmat{\gG & -\kK \\ \kK & \gG} + \bmat{\fF & \pP \\ \pP & -\fF} 
\end{equation}
where $\gG=\frac{1}{2}(a+d)$, $\fF=\frac{1}{2}(a-d)$, $\pP =\frac{1}{2}(b+c)$, $\kK=\frac{1}{2}(c-b)$.
Let 
$\trace(J)$ be its trace,
$\det(J)$ be its determinant, and 
$\disc(J)$ be the discriminant of its characteristic polynomial%
.\footnote{
The characteristic polynomial 
of $J$ is 
$\lambda \mapsto \det(J-\lambda I)$ and its discriminant is $\trace(J)^2 - 4\det(J)$ for $J \in \R^{2\times 2}$.
}
Several directly verifiable 
quantities 
are stated. 
\begin{statement}
\label{main-statement}
Given a matrix $\J \in \mb R^{2\times 2}$
and its 
spectrum $\sigma_p(J)=\{\eig_1, \eig_2\}$, 
the above decomposition gives rise to the following conditions:  
\begin{align*}
\trace(\J) & = 
\lambda_1 + \lambda_2 = 
a+d
=
2
\gG, \\
\det\big(\J\big) & = 
\lambda_1\lambda_2 = ad-bc 
 = (\gG^2 + \kK^2) - (\fF^2 + \pP^2),
\\
\disc\big(\J\big) 
&= 
(\lambda_1 + \lambda_2)^2 - 4\lambda_1 \lambda_2
 = 
4(\fF^2 + \pP^2 - \kK^2), \\
\lambda_{1,2} & =
\tfrac{1}{2}\big(
 \trace(\J) \mp
 \sqrt{\disc(J)}
\big)
=
\gG \mp 
\sqrt{\fF^2 + \pP^2 - \kK^2}.
\end{align*}
\end{statement}

 The change of coordinates from $(a,b,c,d)$ to $(m,h,p,z)$
in Statement~\ref{main-statement} 
provides 
 important insights into linear vector fields and, in particular, to games.
The stability of vector field $\dot x = Jx$ is given by the trace and determinant conditions.
\begin{proposition} %
\label{prop:stability}
The matrix $\J \in \mathbb{R}^{2 \times 2}$ is stable if and only if $\gG^2 + \kK^2 > \fF^2 + \pP^2$ and
$\gG < 0$.
\end{proposition}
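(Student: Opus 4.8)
The plan is to reduce the claim to the elementary trace--determinant (Routh--Hurwitz) test for real $2\times 2$ matrices and then translate it through the closed forms recorded in Statement~\ref{main-statement}. Recall that a real $2\times 2$ matrix $\J$, whose characteristic polynomial is $\lambda^2 - \trace(\J)\lambda + \det(\J)$, is Hurwitz if and only if $\trace(\J)<0$ and $\det(\J)>0$. Substituting $\trace(\J)=2\gG$ and $\det(\J)=(\gG^2+\kK^2)-(\fF^2+\pP^2)$ from Statement~\ref{main-statement}, the condition $\trace(\J)<0$ becomes exactly $\gG<0$ and the condition $\det(\J)>0$ becomes exactly $\gG^2+\kK^2>\fF^2+\pP^2$, which is the asserted equivalence. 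Because the trace--determinant test is itself an $\iff$, this settles both directions at once.

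To keep the argument self-contained I would instead verify the two directions directly from the eigenvalue formula $\lambda_{1,2}=\gG\mp\sqrt{\fF^2+\pP^2-\kK^2}$ in Statement~\ref{main-statement}, splitting on the sign of the radicand $\fF^2+\pP^2-\kK^2$ (equivalently, of $\disc(\J)$). When $\fF^2+\pP^2-\kK^2<0$ the eigenvalues are complex conjugates with common real part $\gG$, so stability holds iff $\gG<0$; and in this regime $\kK^2>\fF^2+\pP^2$ forces $\gG^2+\kK^2>\fF^2+\pP^2$ automatically, so the two stated conditions collapse to $\gG<0$ and remain consistent. When $\fF^2+\pP^2-\kK^2\ge 0$ both eigenvalues are real, and stability is equivalent to the larger one, $\lambda_2=\gG+\sqrt{\fF^2+\pP^2-\kK^2}$, being negative.

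The one place demanding care---and the main (minor) obstacle---is this real-eigenvalue case: from $\gG+\sqrt{\fF^2+\pP^2-\kK^2}<0$ one first reads off $\gG<0$, and only then may one square the inequality $\sqrt{\fF^2+\pP^2-\kK^2}<-\gG$ (both sides now nonnegative) to obtain $\fF^2+\pP^2-\kK^2<\gG^2$, i.e.\ $\gG^2+\kK^2>\fF^2+\pP^2$; squaring before fixing the sign of $\gG$ would be invalid. The converse in this case is immediate, and the boundary $\disc(\J)=0$ (a repeated real eigenvalue $\gG$) is covered by the same chain, with $\gG^2+\kK^2>\fF^2+\pP^2$ reducing to $\gG\ne 0$. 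Purely imaginary eigenvalues (the boundary $\gG=0$, $\fF^2+\pP^2<\kK^2$) are excluded by $\gG<0$, consistent with instability.
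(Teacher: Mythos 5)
Your first paragraph is exactly the paper's argument: apply the trace--determinant (Routh--Hurwitz) test for real $2\times 2$ matrices and substitute $\trace(\J)=2\gG$ and $\det(\J)=(\gG^2+\kK^2)-(\fF^2+\pP^2)$ from Statement~\ref{main-statement}, so the proposal is correct and takes essentially the same route. The additional case analysis on the sign of $\disc(\J)$ via the explicit eigenvalue formula is also correct but redundant given the first paragraph.
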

\begin{proof}
Statement \ref{main-statement} and direct computation show that these conditions are equivalent to $\lambda_1 + \lambda_2 < 0$ and $\lambda_1\lambda_2 >0$, 
well-known conditions for stability of $2\times 2$ systems (illustrated in Fig.~\ref{fig:tracedet:tracedet}).
\end{proof}

\subsection{Discussion of Decomposition}
\label{sec:insights}

The purpose of the decomposition into the alternative coordinates
is to geometrically---and thus
more directly---assess the conditions for stability of a differential Nash equilibrium. 
\ifsixpages
In particular,
the conditions for a fixed point of the game dynamics to be a differential Nash equilibrium
are $a<0$ and $d < 0$,
or equivalently,
$\gG < -|\fF|$,
which are represented 
by the left-shaded regions in Fig.~\ref{fig:gkfp:labels}--\ref{fig:gkfp:stable&nash}.
Moreover,
the conditions for a fixed point of the game dynamics to be stable (i.e., for $\sigma_p(J)\subset \mb{C}_-^\circ$) are $m<0$ and 
$m^2 + z^2 > h^2 + p^2$,
which are visible only
when using the decomposition
in Fig.~\ref{fig:gkfp:stable&nash}.
\else

\paragraph{Relationship to complex plane}
Fig.~\ref{fig:eigs} 
plots the coordinates of
$\gG,\kK,\fF,\pP$
relative to each other
to illustrate
the decomposition in
Statement~\ref{main-statement}.
If 
$\fF=0, \pP=0$, 
then the eigenvalues of $\J$ are $\lambda_{1,2} = \gG \mp \kK i$. Fig.~\ref{fig:eigs:complex} corresponds to a plot of eigenvalues in the complex plane.
Stability is given by the familiar open-left half plane condition: $\spec(\J) \subset \mathbb{C}_-^\circ$.  
If $\fF\neq 0$ or $\pP \neq 0$ 
a circular region in the center of the plane
expands the values of $\gG,\kK$ 
for which the eigenvalues of the matrix are purely real.
Fig.~\ref{fig:eigs:gkfp} shows
that
the eigenvalues are purely real if and only if
$\kK^2 \leq \fF^2 + \pP^2$.

\paragraph{Effect of rotation in game vector fields}

Note the similarity between \eqref{eq:2x2:decomp} and the well-known symmetric/skew-symmetric (Helmholtz) decomposition

\begin{equation}
 \label{eq:2x2:helmholtz}
J(x)=\bmat{a & b \\ c & d}  
= \bmat{\gG + h & p\\ p & m-h} + \bmat{0 & -z \\ z & 0}. 
\end{equation}
Assuming that $m < 0$, from Proposition \ref{prop:stability} we can see that increasing 
the rotational component of the Jacobian helps stability.  Increasing the relative magnitude of $p$, the non-rotational interaction term hurts stability.  If there is no rotational component, ie. $J$ is symmetric, $p$'s negative impact on stability can be seen directly from the Schur complement\footnote{%
The Schur complements of the matrix in~\eqref{eq:game-jacobian} 
are $J_{11}-J_{12}J_{22}^{-1}J_{21}$ (where $J_{22}$ is invertible) and $J_{22}-J_{21}J_{11}^{-1}J_{12}$ (where $J_{11}$ is invertible).}. In this case $J$ is stable iff $J < 0$ and thus stability requires that both the diagonals and the Schur complement are negative: $a<0$, $d<0$, and $a-p^2d^{-1}<0$.  If $d<0$, increasing $p$ can only increase the Schur complement.

 \begin{figure}[t]
 \centering
 \subfloat[][
 The complex plane.
 ]{
  \begin{tikzpicture}[scale=0.35]
   \coordinate (gk) at (-2,1.5);
   \coordinate (fp) at (0,0);
   \coordinate (lim) at (4,4);
   \fill [black] (0,0) circle (2pt);

   \begin{scope}[transparency group]
   \begin{scope}[blend mode=multiply]
 
    \fill[stable fill] let \p1 = (fp), \n{radius} = {veclen(\x1, \y1)}, \p2 = (lim) in
     (0, -\y2) -- (-\x2,-\y2) -- (-\x2, \y2) -- (0, \y2) -- (0, \n{radius})
     arc [start angle=-90, end angle=90, radius=-\n{radius}] -- cycle;
    \draw[stable draw] 
     let \p1 = (fp), \n{radius} = {veclen(\x1, \y1)}, \p2 = (lim) in
     (0, \y2) -- (0, \n{radius}) arc [start angle=-90, end angle=90, radius=-\n{radius}] -- (0,-\y2);
    \draw[dotted] 
     let \p1 = (fp), \n{radius} = {veclen(\x1, \y1)} in
     (0, \n{radius}) arc [start angle=90, end angle=-90, radius=\n{radius}];
    \draw[real draw]
     let \p1 = (fp), \n{radius} = {veclen(\x1, \y1)}, \p2 = (lim) in
     (\x2, -\n{radius}) -- (-\x2, -\n{radius});
    \draw[real draw]
     let \p1 = (fp), \n{radius} = {veclen(\x1, \y1)}, \p2 = (lim) in
     (\x2, \n{radius}) -- (-\x2, \n{radius});
    \fill[real fill]
     let \p1 = (fp), \n{radius} = {veclen(\x1, \y1)}, \p2 = (lim) in
     (\x2, -\n{radius}) rectangle (-\x2, \n{radius});
   \end{scope}
   \end{scope}

   \draw [-stealth,thick] (0,0) -- (gk) node [above] {$\gG+\kK i$};
   \draw [-stealth,thick] let \p1=(gk) in (0,0) -- (\x1, -\y1) node [below] {$\gG-\kK i$};

   \draw (-2,-4) node [above] {Stable};
  \end{tikzpicture}
  \label{fig:eigs:complex}
 }
 \subfloat[][
 A representation of the $m,z,h,p$ coordinates. 
 ]{
  \begin{tikzpicture}[scale=0.35]
   \coordinate (gk) at (-1,2);
   \coordinate (fp) at (1,-1);
   \coordinate (lim) at (4,4);
   \fill [black] (0,0) circle (2pt);

   \begin{scope}[transparency group]
   \begin{scope}[blend mode=multiply]
 
    \fill[stable fill] let \p1 = (fp), \n{radius} = {veclen(\x1, \y1)}, \p2 = (lim) in
     (0, -\y2) -- (-\x2,-\y2) -- (-\x2, \y2) -- (0, \y2) -- (0, \n{radius})
     arc [start angle=-90, end angle=90, radius=-\n{radius}] -- cycle;
    \draw[stable draw] 
     let \p1 = (fp), \n{radius} = {veclen(\x1, \y1)}, \p2 = (lim) in
     (0, \y2) -- (0, \n{radius}) arc [start angle=-90, end angle=90, radius=-\n{radius}] -- (0,-\y2);
    \draw[dotted, thick, color=gray] 
     let \p1 = (fp), \n{radius} = {veclen(\x1, \y1)} in
     (0, \n{radius}) arc [start angle=90, end angle=-90, radius=\n{radius}];

    \draw[real draw]
     let \p1 = (fp), \n{radius} = {veclen(\x1, \y1)}, \p2 = (lim) in
     (\x2, -\n{radius}) -- (-\x2, -\n{radius});
    \draw[real draw]
     let \p1 = (fp), \n{radius} = {veclen(\x1, \y1)}, \p2 = (lim) in
     (\x2, \n{radius}) -- (-\x2, \n{radius});
    \fill[real fill]
     let \p1 = (fp), \n{radius} = {veclen(\x1, \y1)}, \p2 = (lim) in
     (\x2, -\n{radius}) rectangle (-\x2, \n{radius});
  
   \end{scope}
   \end{scope}

   \draw [-stealth,thick] (0,0) -- (fp) node [right] {$(\fF,\pP)$};
   \draw [-stealth,thick] (0,0) -- (gk) node [above left] {$(\gG,\kK)$};
   \draw (-2,-4) node [above] {Stable};
   \draw (4,0) node 
   [left] {\small Real};
   \draw (4,2.7) node 
   [left]
   {\small Imaginary};
   \draw (4,-2.7) node 
   [left]
   {\small Imaginary};
  \end{tikzpicture}
  \label{fig:eigs:gkfp}
 }
 \caption{%
 \emph{Visualization of Statement~\ref{main-statement}:}
 If $\fF$ and $\pP$ are zero, then the eigenvalues 
 of $\J$
 are 
 $\eig_{1,2}=\gG \mp \kK i$.
 If $\fF$ and/or $\pP$ are non-zero, 
 then a circle 
 centered around
 the origin
 with radius $\sqrt{h^2+p^2}$ is excluded from left-half stability region.}
 
 \label{fig:eigs}
\end{figure}
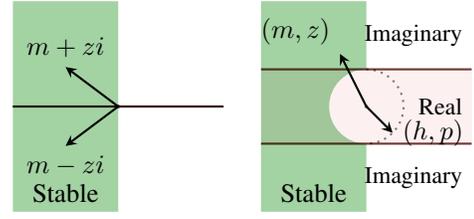

\begin{figure}[t]
\centering
 \subfloat[width=0.5\columnwidth][
Level sets of $\det(\J)=\eig_1 \eig_2$.]{
  \begin{tikzpicture}[scale=0.7,domain=-2:2]
  \clip (-2,-2) rectangle (2,2);
  
  \fill [gray!10!white] (-2,-2) -- (2,-2)
  -- (-2,2) -- cycle;

  \fill [stable fill] (-2,-2) rectangle (0,0);

  \draw[-stealth,thick] (-2,0) -- (2,0) node[below left]{$|\eig_1|$};
  \draw[-stealth,thick] (0,-2) -- (0,2) node[below right]{$|\eig_2|$};
  
  \foreach \l / \k in {.4/.2,.2/.4,.1/.6,.05/.8}
  \draw[domain=0.01:2,samples=128,range=-2:2,opacity=\k] plot (\x,\l/\x);

  \foreach \l / \k in {.4/.2,.2/.4,.1/.6,.05/.8}
  \draw[domain=0.01:2,samples=128,range=-2:2,opacity=\k] plot (-\x,-\l/\x);

  \draw (-1.05,-2) node [text width=1.2cm, above] {\footnotesize 
  $\det(\J)$ \\ $>0$,\\ 
  $\sq{\trace(\J)<0}$
  };
  \draw (-1,1) node [rotate=-45] {\footnotesize $\trace(\J) = 0$};
  
  \draw [very thick,domain=-.2:2] plot (\x,-\x);
  \draw [very thick,domain=-2:-1.8] plot (\x,-\x);

  \end{tikzpicture}
  \label{fig:tracedet:eigs}
 }
 \subfloat[width=0.5\columnwidth][
 Real or imaginary eigenvalues.
]{
  \begin{tikzpicture}[scale=0.7,domain=-2:2,samples=100]
  \clip (-2,-2) rectangle (2,2);
  \fill [stable fill] 
  (0,0) rectangle (-2,2);

  \fill [real fill] (-2,-2)
  -- plot (\x,{\x*\x/2.5})
  -- (2,-2) -- cycle;
  \draw (-1.1,.72) node [rotate=-40] 
  {\footnotesize $\textrm{disc}{(J) = 0}$};
 
  \draw[-stealth,thick] (-2,0) -- (2,0) node[below left]{$\trace(\J)$};
  \draw[-stealth,thick] (0,-1) -- (0,2) node[below right]{$\det(\J)$};
    \draw [very thick,domain=-.2:2] plot (\x,{\x*\x/2.5});
  \draw (0,-1.5) node {$\eig_1,\eig_2\in\mb R$};
  \end{tikzpicture}
  \label{fig:tracedet:tracedet}
}
\caption{\emph{Visualization of~\autoref{prop:stability}}: $\dot y = \J(x) y$ is \tightcolorbox{mygreen!80!white}{stable}
 $\Longleftrightarrow\ \det(J) > 0$ and $\trace(J) < 0$.
}
\label{fig:tracedet}
\end{figure}

\subsection{Types of Games}

The decomposition also provides a natural classification of 2-player scalar games into four types based on specific coordinates being zero, as illustrated in Fig.~\ref{fig:gametype}.

\paragraph{Potential games ($z=0$)}
    The point $(\gG, \kK)$ lives on the horizontal axis 
   in Fig.~\ref{fig:gametype:k0},
   thus stable fixed points are a subset of Nash equilibria.  Since $z = 0$, Proposition \ref{prop:stability} indicates that increasing $p$, the interaction term between the players, and increasing $h$, the difference in curvature between the two players both only hurt stability.

\paragraph{Zero-sum games ($p=0$)}
   The point $(\fF , \pP)$ lives on the horizontal axis
   in 
   Fig.~\ref{fig:gametype:p0},
   thus all Nash equilibria are stable, but not all stable fixed points are Nash.
   The magnitude of the interaction term $\kK$ helps stability and may make a fixed point stable even if it is not Nash.  
   \ifsixpages
   Intuitively, a strong enough interaction term can cause a player which is at its action under the Nash equilibrium with stronger negative curvature to pull another player with weaker positive curvature toward a fixed point even if that point is a local 
   maximum %
   for the weaker player. 
   \fi

\paragraph{Hamiltonian games ($m=0$)} 
    The point $(\gG, \kK)$ lives on the vertical axis 
    in Fig.~\ref{fig:gametype:g0},
    thus no strict Nash equilibria can exist.  At best these games are marginally stable if $|\kK|$ is large enough relative to the magnitude of $(\fF , \pP)$.

\begin{figure}[h]
\centering
 \subfloat[width=0.49\columnwidth][
 Geometry of decomposition in~\eqref{eq:2x2:decomp}. 
 ]{
  \begin{tikzpicture}[scale=0.40]
   \coordinate (ab) at (-1,-3);
   \coordinate (dc) at (-3,1);
   \coordinate (lim) at (4,4);

   \fill [black] (0,0) circle (2pt);

   \draw [gray] let \p2=(lim) in (-\x2,0) -- ($(0,0)$);
   \draw [gray,-stealth] let \p2=(lim) in (0,0) -- ($(\x2,0)$);
   \draw [gray,-stealth] let \p2=(lim) in (0,-\y2) -- (0,\y2);

   \fill[nash fill] let \p1=(lim) in    
    (-\x1,-\y1) rectangle (0, \y1);
   \draw[nash draw] let \p1=(lim) in    
    (0,-\y1) -- (0, \y1);
    
   \draw [-stealth,thick] (0,0) -- (ab) node [left] {$(a,b)$};
   \draw [-stealth,thick] (0,0) -- (dc) node [above] {$(d,c)$};

   \draw [dotted,thick] let \p1=(dc) in
    (\x1,0) -- (dc);
   \draw [dotted,thick] let \p1=(ab) in
    (\x1,0) -- (\p1);
   \draw let \p1=(ab), \p2=(dc), \n1={\x1/2+\x2/2} in
    ($(\n1, 0)$) node [below, xshift=-.5ex ] {$\gG$};
   \fill let \p1=(ab), \p2=(dc), \n1={\x1/2+\x2/2} in
    ($(\n1, 0)$);%

   \draw  let \p1=(ab), \p2=(dc), \n1={\y1/2+\y2/2} in
    ($(0, \n1)$) node [right] {$\pP$};
   \fill let \p1=(ab), \p2=(dc), \n1={\y1/2+\y2/2} in
    ($(0, \n1)$);%

   \draw [-stealth,dashed,thick] let \p1=(ab), \p2=(dc), 
    \n1={\x1/2+\x2/2}, %
    \n2={\x1/2-\x2/2}, %
    \n3={\y1/2+\y2/2}, %
    \n4={\y2/2-\y1/2}  %
    in
    ($(\n1,0)$) --
    ($(\n1,-.15)$) -- node [below] {\small $\fF$}
    ($(\n1+\n2,-.15)$) ;

   \draw [-stealth,dashed,thick] let \p1=(ab), \p2=(dc), 
    \n1={\x1/2+\x2/2}, %
    \n2={\x1/2-\x2/2}, %
    \n3={\y1/2+\y2/2}, %
    \n4={\y2/2-\y1/2}  %
    in
    ($(0, \n3)$) --
    ($(.15, \n3)$) -- node [right] {\small $\kK$}
    ($(.15, \n3+\n4)$);

   \draw [thin,gray,-stealth] let \p2=(lim) in (0,{\y2-0.01}) -- (0,\y2);
   \draw [dotted,thick] let \p1=(dc) in
    (0,\y1) -- (dc);
   \draw [dotted,thick] let \p1=(ab) in
    (0,\y1) -- (\p1);
   \draw (-2,4) node [below] {\small Nash};
    \draw (0,4) node [gray,below right] {\small b,c};
    \draw (4,0) node [gray,below left] {\small a,d};
  \end{tikzpicture}
  \label{fig:gkfp:labels}
 }
 \subfloat[width=0.49\columnwidth][
 Change of coordinates reveals regions of stability.
 ]{
  \begin{tikzpicture}[scale=0.40]
   \coordinate (gk) at (-2,2.5);
   \coordinate (fp) at (1,-1);
   \coordinate (lim) at (4,4);
   \fill [black] (0,0) circle (2pt);

   \draw [thin,gray,-stealth] let \p2=(lim) in (-\x2,0) -- ($(\x2,0)$);
   \draw [thin,gray,-stealth] let \p2=(lim) in (0,-\y2) -- (0,\y2);

   \fill[stable fill]
    let \p1 = (fp),
    \n{radius} = {veclen(\x1, \y1)},
    \p2 = (lim)
    in
    (0, -\y2) -- (-\x2,-\y2) -- (-\x2, \y2) --
    (0, \y2) -- (0, \n{radius})
    arc [start angle=-90, end angle=90, radius=-\n{radius}] 
    -- cycle;
   \draw[stable draw] 
    let \p1 = (fp), \n{radius} = {veclen(\x1, \y1)}, \p2 = (lim) in
    (0, \y2) -- (0, \n{radius})
    arc [start angle=-90, end angle=90, radius=-\n{radius}]
    -- (0,-\y2);
   \draw[thick, dotted] 
    let \p1 = (fp), \n{radius} = {veclen(\x1, \y1)} in
    (0, \n{radius}) arc [start angle=90, end angle=-90, radius=\n{radius}];

   \fill[nash fill] 
    let \p1 = (fp), \p2 = (lim)
    in
    (-\x2, \y2) -- ($(-{abs(\x1)}, \y2)$) -- 
    ($(-{abs(\x1)}, -\y2)$) -- (-\x2, -\y2) -- cycle;
   \draw[nash draw]
    let \p1 = (fp), \p2 = (lim)
    in ($(-{abs(\x1)},-\y2)$) -- ($(-{abs(\x1)},\y2)$);

   \draw [<-] (-.5,2) -- (1,2) node [right] {\small Stable};
   \draw[->] (-1.7,0.3) node [left] {\small Nash} -- (-1.2,0.3);
   \draw [-stealth,thick] (0,0) -- 
   (gk)  node [above] {$(\gG,\kK)$};
   \draw [-stealth,dashed,thick] let \p1 = (fp) in
    (0,0) -- ($(-{abs(\x1)}, \y1)$)
    node [below left] {$(\text{--}|\fF|, \pP)$};
    \draw (0,4) node [gray,below right] {\small z,p};
    \draw (4,0) node [gray,below left] {\small m,h};
   \draw [thin,gray,-stealth] let \p2=(lim) in (0,{\y2-0.01}) -- (0,\y2);
  \end{tikzpicture}
  \label{fig:gkfp:stable&nash}
}
\caption{%
\emph{Decomposition of a general scalar game.}
The rows vectors of $J$ are plotted in (a)
and the same matrix 
with a change of coordinates
is plotted in (b).
	\tightcolorbox{myblue!50!white}{Nash} regions 
	($m<-|h|$) and 
	\tightcolorbox{mygreen!80!white}{stability}
	regions ($m<0,m^2+z^2>h^2+p^2$) 
	are visible.
	Their set differences
	characterize the conditions for a stable non-Nash and unstable Nash equilibria.
}
\label{fig:gkfp}

\centering
 \subfloat[width=0.24\columnwidth][
 Potential:

 Stable $\subset$ Nash.
 ]{
  \begin{tikzpicture}[scale=0.2]
   \coordinate (gk) at (-3,0);
   \coordinate (fp) at (-1.414,1.414);
   \coordinate (lim) at (4,4);
   \fill [black] (0,0) circle (5pt);
    \fill[stable fill]
     let \p1 = (fp), \n{radius} = {veclen(\x1, \y1)}, \p2 = (lim) in
     (0, -\y2) -- (-\x2,-\y2) -- (-\x2, \y2) -- (0, \y2) -- (0, \n{radius})
     arc [start angle=-90, end angle=90, radius=-\n{radius}] -- cycle;
    \draw[stable draw] 
     let \p1 = (fp), \n{radius} = {veclen(\x1, \y1)}, \p2 = (lim) in
     (0, \y2) -- (0, \n{radius})
     arc [start angle=-90, end angle=90, radius=-\n{radius}] -- (0,-\y2);
    \draw[thick, dotted, color=gray] 
     let \p1 = (fp), \n{radius} = {veclen(\x1, \y1)} in
     (0, \n{radius}) arc [start angle=90, end angle=-90, radius=\n{radius}];
    \fill[nash fill] let \p1 = (fp), \p2 = (lim) in 
     (-\x2, \y2) -- ($(-{abs(\x1)}, \y2)$) -- ($(-{abs(\x1)}, -\y2)$) -- (-\x2, -\y2) -- cycle;
    \draw[nash draw] let \p1 = (fp), \p2 = (lim) in 
     ($(-{abs(\x1)},-\y2)$) -- ($(-{abs(\x1)},\y2)$);
   \draw [-stealth,very thick] (0,0) -- (gk);
   \draw [-stealth,very thick] let \p1=(gk) in (0,0) -- (-\x1, -\y1);
   \draw [-stealth,dashed, thick] (0,0) -- (fp);
   \draw [black] (2.2,3.5) node {$\kK=0$};
  \end{tikzpicture}
  \label{fig:gametype:k0}
}
 \subfloat[width=0.24\columnwidth][
Zero-sum:

Stable $\supset$ Nash
 ]{
  \begin{tikzpicture}[scale=0.2]
   \coordinate (gk) at (-3,3);
   \coordinate (fp) at (-2,0);
   \coordinate (lim) at (4,4);
   \fill [black] (0,0) circle (5pt);
    \fill[stable fill]
     let \p1 = (fp), \n{radius} = {veclen(\x1, \y1)}, \p2 = (lim) in
     (0, -\y2) -- (-\x2,-\y2) -- (-\x2, \y2) -- (0, \y2) -- (0, \n{radius})
     arc [start angle=-90, end angle=90, radius=-\n{radius}] -- cycle;
    \draw[stable draw] 
     let \p1 = (fp), \n{radius} = {veclen(\x1, \y1)}, \p2 = (lim) in
     (0, \y2) -- (0, \n{radius})
     arc [start angle=-90, end angle=90, radius=-\n{radius}] -- (0,-\y2);
    \draw[thick, dotted, color=gray] 
     let \p1 = (fp), \n{radius} = {veclen(\x1, \y1)} in
     (0, \n{radius}) 
     arc [start angle=90, end angle=-90, radius=\n{radius}];
    \fill[nash fill] let \p1 = (fp), \p2 = (lim) in 
     (-\x2, \y2) -- ($(-{abs(\x1)}, \y2)$) -- ($(-{abs(\x1)}, -\y2)$) -- (-\x2, -\y2) -- cycle;
    \draw[nash draw] let \p1 = (fp), \p2 = (lim) in 
     ($(-{abs(\x1)},-\y2)$) -- ($(-{abs(\x1)},\y2)$);
   \draw [-stealth,thick] (0,0) -- (gk);
   \draw [-stealth,dashed,very thick] (0,0) -- (fp);
   \draw [black] (2.2,3.5) node {$\pP=0$};
  \end{tikzpicture}
  \label{fig:gametype:p0}
}
\subfloat[width=0.24\columnwidth][
 Hamiltonian: 

 marginally stable at best.
 ]{
  \begin{tikzpicture}[scale=0.2]
   \coordinate (gk) at (0,3);
   \coordinate (fp) at (-1.414,1.414);
   \coordinate (lim) at (4,4);
   \fill [black] (0,0) circle (5pt);
    \fill[stable fill]
     let \p1 = (fp), \n{radius} = {veclen(\x1, \y1)}, \p2 = (lim) in
     (0, -\y2) -- (-\x2,-\y2) -- (-\x2, \y2) -- (0, \y2) -- (0, \n{radius})
     arc [start angle=-90, end angle=90, radius=-\n{radius}] -- cycle;
    \draw[stable draw] 
     let \p1 = (fp), \n{radius} = {veclen(\x1, \y1)}, \p2 = (lim) in
     (0, \y2) -- (0, \n{radius})
     arc [start angle=-90, end angle=90, radius=-\n{radius}] -- (0,-\y2);
    \draw[thick, dotted, color=gray] 
     let \p1 = (fp), \n{radius} = {veclen(\x1, \y1)} in
     (0, \n{radius}) arc [start angle=90, end angle=-90, radius=\n{radius}];
    \fill[nash fill] let \p1 = (fp), \p2 = (lim) in 
     (-\x2, \y2) -- ($(-{abs(\x1)}, \y2)$) -- ($(-{abs(\x1)}, -\y2)$) -- (-\x2, -\y2) -- cycle;
    \draw[nash draw] let \p1 = (fp), \p2 = (lim) in 
     ($(-{abs(\x1)},-\y2)$) -- ($(-{abs(\x1)},\y2)$);
   \draw [-stealth, very thick] (0,0) -- (gk);
   \draw [-stealth, very thick] let \p1=(gk) in (0,0) -- (-\x1, -\y1);
   \draw [-stealth,dashed, thick] (0,0) -- (fp);
   
   \draw [black] (2.2,3.5) node {\small $\gG=0$};
  \end{tikzpicture}
  \label{fig:gametype:g0}
}
 \subfloat[width=0.24\columnwidth][
 Matching:

 Stable $\subset$ Nash.
 ]{
  \begin{tikzpicture}[scale=0.2]
   \coordinate (gk) at (-3,3);
   \coordinate (fp) at (0,-2);
   \coordinate (lim) at (4,4);
   \fill [black] (0,0) circle (5pt);
    \fill[stable fill]
     let \p1 = (fp), \n{radius} = {veclen(\x1, \y1)}, \p2 = (lim) in
     (0, -\y2) -- (-\x2,-\y2) -- (-\x2, \y2) -- (0, \y2) -- (0, \n{radius})
     arc [start angle=-90, end angle=90, radius=-\n{radius}] -- cycle;
    \draw[stable draw] 
     let \p1 = (fp), \n{radius} = {veclen(\x1, \y1)}, \p2 = (lim) in
     (0, \y2) -- (0, \n{radius})
     arc [start angle=-90, end angle=90, radius=-\n{radius}] -- (0,-\y2);
    \draw[thick, dotted, color=gray] 
     let \p1 = (fp), \n{radius} = {veclen(\x1, \y1)} in
     (0, \n{radius}) arc [start angle=90, end angle=-90, radius=\n{radius}];
    \fill[nash fill] let \p1 = (fp), \p2 = (lim) in 
     (-\x2, \y2) -- ($(-{abs(\x1)}, \y2)$) -- ($(-{abs(\x1)}, -\y2)$) -- (-\x2, -\y2) -- cycle;
    \draw[nash draw] let \p1 = (fp), \p2 = (lim) in 
     ($(-{abs(\x1)},-\y2)$) -- ($(-{abs(\x1)},\y2)$);
   \draw [-stealth,thick] (0,0) -- (gk);
   \draw [-stealth,dashed,very thick] (0,0) -- (fp);
   \draw [-stealth,dashed,very thick] let \p1 = (fp) in (0,0) -- (\x1, -\y1);
   \draw [black] (2.2,3.5) node {\small $\fF=0$};
  \end{tikzpicture}
  \label{fig:gametype:f0}
}
\caption{
\emph{Stability and Nash
for different
classes of games.}
(a) Potential games:
symmetric interaction term only hurts stability.
(b) Zero-sum games: rotation can compensate for unhappy player.
(c) Hamiltonian games: players have
zero total curvature, $a+d=0$. 
(d) Matching curvature, $a=d$: there are no stable non-Nash equilibria.
}
\label{fig:gametype}

 \centering
 \subfloat[width=0.48\columnwidth][
 $\learnrate_1 > \learnrate_2$
 ]{
  \begin{tikzpicture}[scale=0.25]
   \coordinate (gk) at (0,2);
   \coordinate (fp) at (1,-1);
   \coordinate (lim) at (4,4);
   \fill [black] (0,0) circle (2pt);

   \draw[dotted, thick] 
    let \p1 = (fp), \n{radius} = {veclen(\x1, \y1)} in
    (-0.5, 1.32288)
    arc [start angle=110.705, end angle=-110.705, radius=\n{radius}];
   \fill[stable fill]
    let \p1 = (fp),
    \n{radius} = {veclen(\x1, \y1)},
    \n{radius2} = {veclen(\x1, \y1)-0.02},
    \p2 = (lim),
    \p3 = (fp)
    in
    (-0.5, -\y2) -- (-\x2,-\y2) -- (-\x2, \y2) --
    (-0.5, \y2) -- (-0.5, 1.32288)
    arc [start angle=-69.29, end angle=69.29, radius=-\n{radius}]
    -- cycle;
   \draw[stable draw] 
    let \p1 = (fp), \n{radius} = {veclen(\x1, \y1)}, \p2 = (lim) in
    (-0.5, \y2) -- (-0.5, 1.32288)
    arc [start angle=-69.29, end angle=69.29, radius=-\n{radius}]
    -- (-0.5,-\y2+0.02);
    
   \fill[nash fill] 
    let \p1 = (fp), \p2 = (lim)
    in
    (-\x2, \y2) -- ($(-{abs(\x1)}, \y2)$) -- 
    ($(-{abs(\x1)}, -\y2)$) -- (-\x2, -\y2) -- cycle;
   \draw[nash draw]
    let \p1 = (fp), \p2 = (lim)
    in ($(-{abs(\x1)},-\y2)$) -- ($(-{abs(\x1)},\y2)$);
    
    \draw[dotted,thick]
    let \p1 = (fp), \p2 = (lim)
    in ($(\x1,-\y2)$) -- ($(\x1,\y2)$);

   \draw [-stealth,thick] (0,0) -- 
   (gk) node [above] {$(\gG,\kK)$};
   \draw [-stealth,thick] (0,0) -- 
   (fp) node [below right] {$(\fF,\pP)$};
   \draw [-stealth,dashed,thick] let \p1 = (fp) in
    (0,0) -- ($(-{abs(\x1)}, \y1)$);
    
  \end{tikzpicture}
}
 \subfloat[width=0.48\columnwidth][
 $\learnrate_1 < \learnrate_2$
 ]{
  \begin{tikzpicture}[scale=0.25]
   \coordinate (gk) at (0,2);
   \coordinate (fp) at (1,-1);
   \coordinate (lim) at (4,4);
   \fill [black] (0,0) circle (2pt);

   \draw[dotted, thick] 
    let \p1 = (fp), \n{radius} = {veclen(\x1, \y1)} in
    (0.5, 1.32288)
    arc [start angle=69.29, end angle=-69.29, radius=\n{radius}]
    (0.5, -1.32288);
   \fill[stable fill]
    let \p1 = (fp),
    \n{radius} = {veclen(\x1, \y1)},
    \p2 = (lim)
    in
    (0.5, -\y2) -- (-\x2,-\y2) -- (-\x2, \y2) --
    (0.5, \y2) -- (0.5, 1.32288)
    arc [start angle=-110.705, end angle=110.705, radius=-\n{radius}]
    -- cycle;
   \draw[stable draw] 
    let \p1 = (fp), \n{radius} = {veclen(\x1, \y1)}, \p2 = (lim) in
    (0.5, \y2) -- (0.5, 1.32288)
    arc [start angle=-110.705, end angle=110.705, radius=-\n{radius}]
    -- (0.5,-\y2);

   \fill[nash fill] 
    let \p1 = (fp), \p2 = (lim)
    in
    (-\x2, \y2) -- ($(-{abs(\x1)}, \y2)$) -- 
    ($(-{abs(\x1)}, -\y2)$) -- (-\x2, -\y2) -- cycle;
   \draw[nash draw]
    let \p1 = (fp), \p2 = (lim)
    in ($(-{abs(\x1)},-\y2)$) -- ($(-{abs(\x1)},\y2)$);
    
    \draw[dotted,thick]
    let \p1 = (fp), \p2 = (lim)
    in ($(\x1,-\y2)$) -- ($(\x1,\y2)$);

   \draw [-stealth,thick] (0,0) -- 
   (gk) node [above] {$(\gG,\kK)$};
   \draw [-stealth,thick] (0,0) -- 
   (fp) node [below right] {$(\fF,\pP)$};
   \draw [-stealth,dashed,thick] let \p1 = (fp) in
    (0,0) -- ($(-{abs(\x1)}, \y1)$);
    
  \end{tikzpicture}
}
 \caption{%
 \emph{Time-scale separation affects stability.}
 The learning rate ratio $\tau=\learnrate_2/\learnrate_1>0$ affects the stability of the game dynamics.
 The factor
 $\beta = \tfrac{\tau - 1}{\tau + 1}$
 expands or shrinks the region for stability.  The condition $m < 0$ becomes $m < \beta h$. Note that $-1 \leq \beta \leq 1$ for $\tau \geq 0$. 
 \ifsixpages
 For $\beta \to \pm 1$,
 the region's vertical boundary 
 approaches $\pm h$.
 \fi
 }
 \label{fig:rates}
 \end{figure}

\paragraph{Matching-curvature games, ($h=0$)}
The point $(\fF , \pP)$ lives on the vertical axis 
in Fig.~\ref{fig:gametype:f0}, 
so any stable point is also a Nash. 
Any fixed point with $a,d$ having the same sign can be rescaled to have matching curvature $\learnrate_1 a=\learnrate_2 d$ by a choice of non-uniform learning rates $\learnrate_1,\learnrate_2>0$.

\section{Certificates for Stability of Game Dynamics}
\label{sec:certificates}
\subsection{Stability: Uniform Learning Rates}
\label{sec:learningrates}

For a game $\mc{G}=(\cost_1,\cost_2)$, let the set of differential Nash equilibria be denoted ${\tt DNE}(\mc{G})$ and let the stable points of $\dot{x}=-g(x)$ be ${\tt S}(\mc{G})$. Let $\overline{\tt DNE}(\mc{G})$ and $\overline{\tt S}(\mc{G})$ be their respective complements.
The intersections of these sets characterize the stability/instability of Nash/non-Nash equilibria.
\begin{theorem}[Certificates for $2$-Player Scalar Games]
\label{prop:classification}
Consider a 
game $\mc{G}=(\cost_1, \cost_2)$ on $X_1\times X_2\subseteq \R^2$. Let $\fixedpoint{x}$ be a fixed point of~\eqref{eq:gameform} and let $m,h,p,z$ be defined by~\eqref{eq:2x2:decomp}. %
The following equivalences hold:
\begin{enumerate}
[label=(\roman*)]
    \item $x\in {\tt DNE}(\mc{G})\cap {\tt S}(\mc{G}) \Longleftrightarrow$\\ $\qquad\quad\{
\gG < -|\fF|\}\wedge
\{
\gG^2+\kK^2 > \fF^2 + \pP^2\}
$.
\item $x\in {\tt DNE}(\mc{G})\cap \overline{\tt S}(\mc{G})\Longleftrightarrow$\\ $\qquad\quad
\{\gG < -|\fF|\}\wedge
\{\gG^2+\kK^2 \leq \fF^2 + \pP^2\}
$.
\item $x\in \overline{\tt DNE}(\mc{G})\cap {\tt S}(\mc{G})\Longleftrightarrow$\\ 
$\quad\qquad
\{0 > m \geq -|h|\}\wedge
\{\gG^2+\kK^2 > \fF^2 + \pP^2\}
.$
\item $x\in \overline{\tt DNE}(\mc{G})\cap \overline{\tt S}(\mc{G})\Longleftrightarrow$\\
$\quad\qquad\{
\{0 > \gG \geq -|\fF|\}\wedge
\{\gG^2+\kK^2 \leq \fF^2 + \pP^2\}\}\vee\{\gG \geq 0\}
$.
\end{enumerate}
\end{theorem}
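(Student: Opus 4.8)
The plan is to reduce all four equivalences to two elementary conditions in the decomposition coordinates and then dispatch each case by Boolean algebra over the complementary dichotomies ${\tt DNE}/\overline{\tt DNE}$ and ${\tt S}/\overline{\tt S}$. First I would translate membership in ${\tt DNE}(\mc{G})$ into the $(\gG,\kK,\fF,\pP)$ coordinates. Since $\fixedpoint{x}$ is assumed to be a fixed point of~\eqref{eq:gameform}, the first-order condition $D_if_i(\fixedpoint{x})=0$ holds automatically, so by \Cref{def:nash} membership in ${\tt DNE}(\mc{G})$ is governed solely by the second-order conditions $D_1^2f_1>0$ and $D_2^2f_2>0$. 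Reading off the diagonal blocks of~\eqref{eq:game-jacobian} these are exactly $a<0$ and $d<0$; substituting $a=\gG+\fF$ and $d=\gG-\fF$ gives $\gG<-\fF$ and $\gG<\fF$, i.e. $\gG<-|\fF|$. Hence $\fixedpoint{x}\in{\tt DNE}(\mc{G})\iff\gG<-|\fF|$, and by complementation $\fixedpoint{x}\in\overline{\tt DNE}(\mc{G})\iff\gG\ge-|\fF|$.

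Next I would invoke \Cref{prop:stability} directly: $\fixedpoint{x}\in{\tt S}(\mc{G})\iff\{\gG<0\}\wedge\{\gG^2+\kK^2>\fF^2+\pP^2\}$, so by De Morgan $\fixedpoint{x}\in\overline{\tt S}(\mc{G})\iff\{\gG\ge0\}\vee\{\gG^2+\kK^2\le\fF^2+\pP^2\}$. With these two characterizations in hand, cases (i)--(iii) follow by straightforward intersection. For (i) the Nash condition $\gG<-|\fF|\le0$ already forces $\gG<0$, so the stability clause $\gG<0$ is redundant and only $\gG^2+\kK^2>\fF^2+\pP^2$ survives. The same observation handles (ii): intersecting $\gG<-|\fF|$ with the disjunction for $\overline{\tt S}$ eliminates the $\gG\ge0$ disjunct, leaving $\gG^2+\kK^2\le\fF^2+\pP^2$. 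For (iii), intersecting $\gG\ge-|\fF|$ with the two stability clauses yields $-|\fF|\le\gG<0$ together with $\gG^2+\kK^2>\fF^2+\pP^2$, which is the stated form $\{0>\gG\ge-|\fF|\}\wedge\{\gG^2+\kK^2>\fF^2+\pP^2\}$.

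The one case requiring care is (iv), $\overline{\tt DNE}(\mc{G})\cap\overline{\tt S}(\mc{G})$, which I expect to be the main (if modest) obstacle because it is the only case in which neither factor collapses to a single clause. Here I would intersect $\{\gG\ge-|\fF|\}$ with $\{\gG\ge0\}\vee\{\gG^2+\kK^2\le\fF^2+\pP^2\}$ and distribute. The first branch $\{\gG\ge-|\fF|\}\wedge\{\gG\ge0\}$ simplifies to $\{\gG\ge0\}$ because $-|\fF|\le0$. For the second branch $\{\gG\ge-|\fF|\}\wedge\{\gG^2+\kK^2\le\fF^2+\pP^2\}$ I would split according to the sign of $\gG$: the part with $\gG\ge0$ is already contained in the first branch, so the union is unchanged if I append $\gG<0$ to the second branch, giving $\{0>\gG\ge-|\fF|\}\wedge\{\gG^2+\kK^2\le\fF^2+\pP^2\}$. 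Taking the union of the two branches reproduces the displayed disjunction exactly. A final sanity check is that the four regions are mutually exclusive and exhaust the parameter space, which is immediate since they are obtained by intersecting the complementary pairs ${\tt DNE}/\overline{\tt DNE}$ and ${\tt S}/\overline{\tt S}$.
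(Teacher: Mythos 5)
Your proposal is correct and follows exactly the route the paper intends: the theorem is stated without an explicit proof precisely because it reduces to intersecting the differential Nash condition $a<0,\ d<0$ (equivalently $\gG<-|\fF|$) with the stability characterization of \cref{prop:stability}, which is what you do. Your case analysis, including the only slightly delicate distribution in (iv), checks out.
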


The contributions to the stability of a non-Nash equilibrium or the instability of a Nash equilibrium are 
stated in (ii) and (iii).
We illustrate the geometry of these two cases 
with the shaded regions 
in Fig.~\ref{fig:gkfp:stable&nash}. 
\ifsixpages
First, the differential Nash equilibrium condition is based on the magnitude of $\fF$, implying that
$m \pm zi \in \mb{C}_-^\circ$.
Therefore, if $J$ is not stable at a differential Nash equilibrium, then 
$(\gG , \kK)$ must lie
 inside the circle with radius $\sqrt{\fF^2 + \pP^2}$. 
Second, if a non-Nash equilibrium has stable dynamics, then it must be the case that
 $m\geq |h|$, 
 $m\pm zi \in \mb{C}_-^\circ$, and 
$(m,z)$ is outside the circle. 
\fi

\subsection{Stability: Non-Uniform Learning Rates}
\label{sec:learningrates}

Consider players updating their actions according to gradient play as defined in~\eqref{eq:gradientplay} with
individual learning rates $\learnrate_1,\learnrate_2>0$, not necessarily equal.
We study how the players' {ratio} 
$\tau=\gamma_2/\gamma_1$ 
affects the stability of fixed point $x$
under the learning dynamics by analyzing the game Jacobian 
\begin{equation}
J(x)= 
\bmat{a & b \\ \tau c & \tau d}.
\label{eq:game-jacobian-perturbed}
\end{equation}
Learning rates do not affect whether a fixed point is a Nash equilibrium.
They do, however, affect whether it is stable.

\begin{corollary}[Stability in General-Sum Scalar Games]
Consider a game $\mc{G}=(f_1,f_2)$ on
$X_1\times X_2\subseteq \R^2$ and a fixed point ${x}$. %
Suppose players perform gradient play~\eqref{eq:gradientplay} with
learning rate ratio $\tau=\gamma_2/\gamma_1$.
Then, 
the following are true.
\begin{enumerate}
[label=(\roman*)]

    \item If a Nash equilibrium is stable for some 
    $\tau$, then it is stable for all 
    $\tau$.  
    \item If a non-Nash equilibrium is stable,
    then there exists some 
    $\tau$ that makes it unstable.  
    \item 
    If a 
    fixed point is 
    non-Nash, the determinant of its game Jacobian is positive and $m<|h|$,
    then there exists some 
    $\tau$ that makes it
    stable.  
\end{enumerate}
\label{prop:2x2:learnrate}
\end{corollary}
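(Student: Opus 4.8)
The plan is to reduce everything to the two scalar invariants that decide stability of a $2\times2$ real matrix---its trace and determinant---evaluated on the learning-rate-weighted Jacobian~\eqref{eq:game-jacobian-perturbed}. Writing $J_\tau$ for that matrix, a one-line computation gives
\[
\trace(J_\tau)=a+\tau d,\qquad \det(J_\tau)=\tau(ad-bc)=\tau\det(J).
\]
The decisive structural fact is that, since $\tau>0$, the determinant is merely rescaled and keeps its sign: $\det(J_\tau)>0$ if and only if $\det(J)>0$, for every admissible $\tau$. Hence the learning-rate ratio can influence stability \emph{only} through the affine term $a+\tau d$. Invoking the trace--determinant criterion already used in the proof of Proposition~\ref{prop:stability}, $J_\tau$ is stable exactly when $\det(J)>0$ and $a+\tau d<0$. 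I would establish this reduction first and then read the three claims as elementary statements about the map $\tau\mapsto a+\tau d$ on $(0,\infty)$.

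For (i), a differential Nash equilibrium has $a<0$ and $d<0$ (equivalently $m<-|h|$, where $m=\tfrac12(a+d)$, $h=\tfrac12(a-d)$), so $a+\tau d<0$ holds for \emph{all} $\tau>0$ automatically. Stability is then governed solely by the $\tau$-independent sign of $\det(J)$: if $J_{\tau_0}$ is stable for some $\tau_0$ then $\det(J)>0$, whence $\det(J_\tau)>0$ and $J_\tau$ is stable at every $\tau$. For (ii), I would observe that a stable non-Nash point forces $\det(J)>0$ together with $a+\tau d<0$; since the Nash condition $a,d<0$ fails while $a+\tau d<0$ with $\tau>0$, the two diagonal entries must have opposite signs (generically one strictly positive and one strictly negative). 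The line $\tau\mapsto a+\tau d$ then has a positive root $\tau^{\ast}=-a/d$, and choosing $\tau$ on the side of $\tau^{\ast}$ where the trace is nonnegative (large $\tau$ if $d>0$, small $\tau$ if $d<0$) violates the trace condition while leaving $\det(J_\tau)>0$ intact, rendering $J_\tau$ unstable.

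For (iii), because $\det(J)>0$ is assumed, producing a stabilizing $\tau$ amounts to finding $\tau>0$ with $a+\tau d<0$. I would show such a $\tau$ exists precisely when $a<0$ or $d<0$: take $\tau$ large if $d<0$, take $\tau$ small if $d>0$ and $a<0$, and note $a<0$ suffices if $d=0$. It then remains to exclude the case $a\ge 0$ and $d\ge 0$, and this is exactly where the hypothesis $m<|h|$ is used: if both diagonals were nonnegative then $ad\ge 0$, so $(a+d)^2\ge(a-d)^2$, i.e.\ $m^2\ge h^2$ with $m\ge 0$, giving $m\ge|h|$ and contradicting $m<|h|$. Hence $a<0$ or $d<0$, a stabilizing $\tau$ exists, and $J_\tau$ is stable.

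The main obstacle is not the algebra but the degenerate boundary cases $a=0$ or $d=0$ (equivalently $m=-|h|$ or $m=|h|$), where $\tau\mapsto a+\tau d$ loses a positive root or has constant sign and the trace cannot be driven across zero within $(0,\infty)$. There claims (ii) and (iii) must either be read with the non-degeneracy implicit in a \emph{strict} (differential) Nash equilibrium, or have these boundary strata treated separately. Making the ``there exists $\tau$'' quantifiers and these edge cases precise---rather than the trace/determinant computation itself---is the delicate part of the argument.
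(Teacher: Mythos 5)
Your argument is correct and follows the same route as the paper's own proof: reduce stability of the rate-weighted Jacobian in~\eqref{eq:game-jacobian-perturbed} to the trace--determinant criterion of Proposition~\ref{prop:stability}, note that $\det$ only rescales by $\tau>0$ so its sign is invariant, and study the affine map $\tau\mapsto a+\tau d$ (the paper's condition $m<\beta h$ with $\beta=\tfrac{\tau-1}{\tau+1}$ is exactly $a+\tau d<0$ after dividing by $1+\tau$). Your write-up is in fact tighter in two respects. First, for (ii) the paper's printed proof takes $\tau<\left|\tfrac{a}{d}\right|$ under the normalization $a<0$, $d>0$ and concludes $a+\tau d<0$, which is the \emph{stable} direction; the destabilizing choice is $\tau\geq -a/d$ when $d>0$ (or small $\tau$ when $a>0$, $d<0$), which is the side you pick. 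Second, your observation that (ii) genuinely fails on the boundary stratum $d=0$ (there $a<0$ and $\det(J)>0$ give a stable non-Nash fixed point whose trace $a+\tau\cdot 0$ stays negative for every $\tau>0$) identifies a real gap in the statement that the paper's proof steps over silently; the claim needs a nondegeneracy assumption such as $ad\neq 0$, or ``unstable'' weakened to ``not exponentially stable.'' Your treatment of (iii), isolating $m<|h|$ as precisely the hypothesis that excludes $a\geq 0$ and $d\geq 0$ simultaneously, supplies the justification for the paper's unexplained ``without loss of generality, let $d<0$.''
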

\begin{proof}
To prove (i), we observe that if $\gG < -|\fF|$, then $\gG \leq \beta \fF$ for all $\beta$ such that $|\beta|<1$.  Choose $-1 \leq \beta = \tfrac{\tau - 1}{\tau + 1} \leq 1$ for $\tau \geq 0$. 
  To prove (ii), 
  choose
    $\tau <  
    \left|\tfrac{a}{d}\right|$.
    Without loss of generality, assume $a < 0$ and $d > 0$. 
    Then, it directly follows that $a + \tau d < 0$.  
To prove (iii), note that
a matrix $J$ is stable if and only if the determinant of $J$ is positive and $m<0$. Hence, without loss of generality, let $d<0$. Then there is a learning rate $\tau$ such that $\tau|d|>|a|$ so that $m<0$.

\end{proof}

Stable Nash equilibria 
in scalar games 
are robust to variations in learning rates and non-Nash equilibria are not.
For continuous games with vector action spaces,
Corollary~\ref{prop:2x2:learnrate}(i) 
no longer holds,
demonstrating that Nash equilibria are not robust, in general,
to variations in learning rates. 
\ifsixpages
Stronger conditions, however, can be stated for the special cases of potential and zero-sum games, corresponding to the symmetric and asymmetric components of the game Jacobian.
\fi

\section{An Illustrative Example}
\label{sec:examples}

We demonstrate our main results
below and in
Fig. \ref{fig:examples}.  
\begin{example}[Nonlinear torus game]
\label{ex:torus}
Consider a game $\mc{G}=(f_1,f_2)$ defined on $\mb S^1\times \mb S^1$ with costs
\eqn{
f_1(x,y) &= \tfrac{2}{a}\cos\left(\tfrac{a}{2} x\right) + \tfrac{2}{a}\cos\left(\tfrac{a}{2}x+by\right),\\
f_2(x,y) &= \tfrac{2}{d}\cos\left(\tfrac{d}{2}y\right) + \tfrac{2}{d}\cos\left(\tfrac{d}{2} y+cx\right).
}
There is a fixed point of the learning dynamics at the origin. Its linearized game Jacobian is 
$
J(0) = \left[\begin{smallmatrix}a & b \\ c & d\end{smallmatrix}\right].
$
First, to show Corollary \ref{prop:2x2:learnrate}(i), we start with an unstable, Nash fixed point of a potential game ($a=-0.4,b=1,c=1,d=-1$). 
We decrease $p = \tfrac{1}{2}(b+c)$ until it becomes stable ($b=0.2,c=0.2$).
Then, we decrease $\tau$ from $1$ to $0.1$ while maintaining stability. 
Second, to show Corollary \ref{prop:2x2:learnrate}(ii), we start with an unstable, non-Nash fixed point of a zero-sum game ($a=0.4,b=-0.2,c=0.2,d=-1$). 
We increase $z=\tfrac{1}{2}(c-b)$ until it becomes stable  ($b=-1,c=1$). 
Then, we decrease $\tau$ from $1$ to $0.01$ making it unstable again. 
Third, to show Corollary \ref{prop:2x2:learnrate}(iii), we start with an unstable, non-Nash fixed point of a Hamiltonian game ($a=0.5,b=0.1,c=0.5,d=-0.5$). 
We increase the interaction term $z=\tfrac{1}{2}(c-b)$ until it becomes marginally stable ($b=-0.5,c=1.1$). 
Then, we increase $\tau$ slightly from $1$ to $2$, making the fixed point stable.

\end{example}

\section{Conclusion}
\label{sec:conclusion}

We characterize 
the local stability 
and Nash optimality of fixed points of 2-player general-sum 
gradient learning dynamics.
\ifsixpages
We assess the contribution of the interaction terms of the game Jacobian to stabilizing/destabilizing a Nash/non-Nash equilibrium.
\fi
Our results give valuable insights 
into the interaction of algorithms in settings
most accurately 
modeled 
as games,
for example, 
when agents lack trust or reliable communication.
In the sequel, we 
characterize continuous games defined on vector action spaces.  
\ifsixpages
We show that all Nash equilibria of zero-sum games are stable and all stable fixed points of potential games are Nash. 
Future work will include more detailed characterizations of the spectrum of general-sum game dynamics and 
analysis of other models of boundedly-rational learning agents.
\fi

\bibliographystyle{plain}
\bibliography{refs}

\begin{figure}[t]
\input{figs/examples.tex}
\caption{
\emph{Demonstration of Corollary~\ref{prop:2x2:learnrate}:}
vector field plots of the three scenarios from Example~\ref{ex:torus}.
}
\label{fig:examples}
\end{figure}

\end{document}